\newtheorem{proposition}{Proposition}
\theoremstyle{definition}
\newtheorem{example}{Example}
\newcommand{\bra}[1]{\langle #1|}
\newcommand{\ket}[1]{| #1 \rangle }
\newcommand{\ip}[2]{{\langle #1|}{ #2 \rangle }}
\begin{document}

\title{Spin polarization-scaling quantum maps and channels}

\author{Sergey N. Filippov}

\affiliation{Moscow Institute of Physics and Technology,
Institutskii Per. 9, Dolgoprudny, Moscow Region 141700, Russia}

\affiliation{Valiev Institute of Physics and Technology of Russian
Academy of Sciences, Nakhimovskii Pr. 34, Moscow 117218, Russia}

\affiliation{Steklov Mathematical Institute of Russian Academy of
Sciences, Gubkina St. 8, Moscow 119991, Russia}

\author{Kamil Yu. Magadov}

\thanks{Deceased July 6, 2017.}

\affiliation{Moscow Institute of Physics and Technology,
Institutskii Per. 9, Dolgoprudny, Moscow Region 141700, Russia}

\begin{abstract}
We introduce a spin polarization-scaling map for spin-$j$
particles, whose physical meaning is the decrease of spin
polarization along three mutually orthogonal axes. We find
conditions on three scaling parameters under which the map is
positive, completely positive, entanglement breaking,
2-tensor-stable positive, and 2-locally entanglement annihilating.
The results are specified for maps on spin-$1$ particles. The
difference from the case of spin-$\frac{1}{2}$ particles is
emphasized.
\end{abstract}

\keywords{Spin polarization, qubit, qutrit, positive map, quantum
channel, entanglement breaking, 2-tensor-stable properties}

\maketitle

\section{\label{section-introduction} Introduction}

Quantum states of a spin-$j$ particle are described by
$(2j+1)\times (2j+1)$ density matrices $\varrho \in \mathcal{S}
(\mathcal{H})$ satisfying the properties $\varrho^{\dag} =
\varrho$, ${\rm tr}[\varrho] = 1$, and $\bra{\varphi} \varrho
\ket{\varphi} \geqslant 0$ for all $\ket{\varphi}\in \mathcal{H}$,
${\rm dim}\mathcal{H} = 2j+1$. Taking into account the
normalization condition, the density matrix $\varrho$ is defined
by $(2j+1)^2-1$ real parameters, which are usually treated as
components of the generalized Bloch
vector~\cite{petz-2009,checinska-2009,karimipour-2011,byrd-2011,goyal-2016}.
However, many physical phenomena can be explained and visualized
via a spin polarization vector ${\bf p} \in \mathbb{R}$ with
components $p_i = {\rm tr}[\varrho J_i]$, where $J_1,J_2,J_3$ are
usual $(2j+1)$-dimensional representations of angular momentum
operators (see, e.g.,~\cite{varshalovich}). Angular momentum
operators are Hermitian and satisfy the commutation relation
$[J_k,J_l] = i e_{klm} J_m$, where $e_{klm}$ is the conventional
Levi-Civita symbol and the summation over $m$ being assumed. Note
that the spin-polarization vector ${\bf p}$ does not contain the
full information about the quantum state if $j \geqslant 1$.
Despite this fact, it is of great use in quantum physics and
chemistry as its components represent average spin projections
onto three orthogonal axes and are experimentally measurable.
Linear transformations of the spin polarization vector include
rotations and scaling. Rotations are attributed to the unitary
evolution, so we do not consider them in the present paper.
Physically motivated scaling of the spin polarization vector is
described by a map $\Phi: \mathcal{B}(\mathcal{H}) \mapsto
\mathcal{B}(\mathcal{H})$ of the following form:
\begin{equation}
\label{polarization-scaling-map} \Phi[X] = \frac{1}{2j+1} {\rm
tr}[X] I + \frac{3}{j(j+1)(2j+1)} \sum_{i=1}^3 \lambda_i {\rm
tr}[X J_i] J_i \, ,
\end{equation}

\noindent where $I \in \mathcal{B}(\mathcal{H})$ is the identity
operator and $\lambda_i \in \mathbb{R}$. The factors take into
account that ${\rm tr}[I] = 2j+1$ and ${\rm tr}[J_k J_l] =
\frac{1}{3} j(j+1)(2j+1) \delta_{kl}$, $\delta_{kl}$ is the
Kronecker delta. The map~\eqref{polarization-scaling-map} is
trace-preserving and unital, i.e. ${\rm tr}\left[ \Phi[X] \right]
= {\rm tr} [X]$ and $\Phi[I] = I$. Note that the
map~\eqref{polarization-scaling-map} differs in general from other
classes of unital maps~\cite{nathanson-2007,landau-1993}. Physical
meaning of Eq.~\eqref{polarization-scaling-map} is the
transformation of the spin polarization
\begin{equation}
p_i \mapsto \lambda_i p_i, \qquad i=1,2,3.
\end{equation}

In case of spin-$\frac{1}{2}$ particles,
formula~\eqref{polarization-scaling-map} transforms into a
well-known Pauli qubit map $\Phi[X] = \frac{1}{2} \left( {\rm
tr}[X] I + \sum_{i=1}^3 \lambda_i {\rm tr}[X \sigma_i] \sigma_i
\right)$, where $(\sigma_1,\sigma_2,\sigma_3)$ is the conventional
set of Pauli matrices (see,
e.g.,~\cite{nielsen-chuang,heinosaari-ziman}). The qubit
($j=\frac{1}{2}$) map $\Phi$ is known to be positive if and only
if $|\lambda_i| \leqslant 1$, completely positive if and only if
$1 \pm \lambda_3 \geqslant |\lambda_1 \pm \lambda_2|$,
entanglement breaking if and only if $|\lambda_1| + |\lambda_2| +
|\lambda_3| \leqslant 1$, 2-local-entanglement-annihilating if and
only if $\lambda_1^2+\lambda_2^2+\lambda_3^2 \leqslant 1$,
2-tensor-stable positive if and only if $1 \pm \lambda_3^2
\geqslant |\lambda_1^2 \pm
\lambda_2^2|$~\cite{filippov-rybar-ziman-2012,filippov-2014,filippov-magadov}.
Similar characterization for higher spins ($j\geqslant 1$) is
still missing, so the goal of the present paper is to analyze
analogous properties of such maps and illustrate them for qutrits
($j=1$).

The paper is organized as follows. In
Sec.~\ref{section-positivity}, we analyze positivity of the
map~\eqref{polarization-scaling-map}. In Sec.~\ref{section-cp},
the criterion of complete positivity of such a map is presented.
In Sec.~\ref{section-eb}, the entanglement breaking property is
partially characterized. In Sec.~\ref{section-2-tsp}, we review
the positivity and entanglement annihilation behaviour of the map
$\Phi\otimes\Phi$. In Sec.~\ref{section-conclusions}, brief
conclusions are presented.

\section{Positivity}
\label{section-positivity}

We will refer to an operator $R$ as positive-semidefinite and
write $R \geqslant 0$ if $\bra{\varphi} R \ket{\varphi} \geqslant
0$ for all $\ket{\varphi} \in \mathcal{H}$. A map $\Phi$ is called
positive if $\Phi[X] \geqslant 0$ for all $X \geqslant
0$~\cite{stormer-1963}.

Let us now analyze positivity of the spin polarization-scaling
map~\eqref{polarization-scaling-map}.

Since each $J_i$ is a spin projection operator with eigenvalues
$j, j-1, \ldots, -j$, eigenvalues of the operator $a_1 J_1 + a_2
J_2 + a_3 J_3$ are $|{\bf a}|\{j,j-1,\ldots,-j\}$, where $|{\bf
a}| = \sqrt{a_1^2 + a_2^2 + a_3^2}$. Therefore, the minimal
eigenvalue of $\Phi[X]$ reads
\begin{equation}
\label{minimal-eigenvalue} \frac{1}{2j+1} \left( {\rm tr}[X] -
\frac{3}{j+1} \sqrt{ \sum_{i=1}^3 \left( \lambda_i {\rm tr}[X J_i]
\right)^2} \right).
\end{equation}

\noindent Suppose $X\geqslant 0$. As $|{\rm tr}[X J_i]| \leqslant
j {\rm tr}[X]$, the minimal value of \eqref{minimal-eigenvalue} is
non-negative if $1 - \frac{3j}{j+1}\sqrt{\sum_{i=1}^3 \lambda_i^2}
\geqslant 0$. Thus, we have found sufficient condition for
positivity of the map $\Phi$.

\begin{proposition}
Spin-polarization-scaling map $\Phi$ is positive if $\sum_{i=1}^3
\lambda_i^2 \leqslant \left( \frac{j+1}{3j} \right)^2$.
\end{proposition}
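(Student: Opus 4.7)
The plan is to derive the sufficient condition directly from the explicit formula~\eqref{minimal-eigenvalue} for the minimal eigenvalue of $\Phi[X]$ on a positive semidefinite input $X$. Since that formula already encodes the spectral structure of the angular-momentum part of $\Phi[X]$, all that remains is to bound the quantity $\sqrt{\sum_{i=1}^3 (\lambda_i\,\mathrm{tr}[X J_i])^2}$ uniformly over positive $X$ with fixed trace.

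First I would reconfirm why the minimal eigenvalue is given by~\eqref{minimal-eigenvalue}: the operator $\Phi[X]$ decomposes as $\frac{\mathrm{tr}[X]}{2j+1}\,I + \mathbf{a}\cdot\mathbf{J}$ with components $a_i = \frac{3\lambda_i\,\mathrm{tr}[X J_i]}{j(j+1)(2j+1)}$, and by the rotational covariance of the spin representation the spectrum of $\mathbf{a}\cdot\mathbf{J}$ is $|\mathbf{a}|\{j,j-1,\ldots,-j\}$. Hence the smallest eigenvalue of $\Phi[X]$ equals $\frac{\mathrm{tr}[X]}{2j+1} - \frac{j}{2j+1}\cdot\frac{3}{j(j+1)}\sqrt{\sum_i(\lambda_i\,\mathrm{tr}[X J_i])^2}$, which is precisely~\eqref{minimal-eigenvalue}.

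Next, for $X\geqslant 0$ one has the coordinate-wise bound $|\mathrm{tr}[X J_i]| \leqslant \|J_i\|_\infty\,\mathrm{tr}[X] = j\,\mathrm{tr}[X]$, since each $J_i$ has operator norm $j$. Plugging this in gives
\begin{equation*}
\sqrt{\sum_{i=1}^3 (\lambda_i\,\mathrm{tr}[X J_i])^2} \;\leqslant\; j\,\mathrm{tr}[X]\sqrt{\sum_{i=1}^3 \lambda_i^2},
\end{equation*}
and substituting into~\eqref{minimal-eigenvalue} shows that the minimal eigenvalue of $\Phi[X]$ is non-negative whenever $1 - \frac{3j}{j+1}\sqrt{\sum_i \lambda_i^2} \geqslant 0$, i.e.\ whenever $\sum_i \lambda_i^2 \leqslant \left(\frac{j+1}{3j}\right)^2$.

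The only genuine subtlety is the coordinate-wise bound $|\mathrm{tr}[X J_i]| \leqslant j\,\mathrm{tr}[X]$ and the recognition that it is loose for $j\geqslant 1$: the three spin components cannot simultaneously attain their individual maxima because the reachable polarization vector lies in a strict subset of the ball of radius $j$. This looseness is precisely why the resulting inequality is only sufficient and generally not necessary, in agreement with the proposition being stated as a one-sided implication.
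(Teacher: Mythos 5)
Your proof is correct and follows essentially the same route as the paper: it uses the spectral formula~\eqref{minimal-eigenvalue} for the minimal eigenvalue of $\Phi[X]$ together with the bound $|{\rm tr}[X J_i]| \leqslant j\,{\rm tr}[X]$ to obtain the sufficient condition. Your closing remark on why the coordinate-wise bound is loose for $j\geqslant 1$ is a correct observation, consistent with the proposition being only a sufficient condition.
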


The necessary condition for positivity of the
map~\eqref{polarization-scaling-map} follows from the particular
form of the positive-semidefinite operator
\begin{equation}
\label{X-positive} X = x_0 I + \sum_{i=1}^3 x_i J_i, \quad x_0
\geqslant 0, \quad x_i \in \mathbb{R}, \quad x_0 \geqslant j
\sqrt{\sum_{i=1}^3 x_i^2}.
\end{equation}

\noindent In fact, if $X$ is given by formula \eqref{X-positive},
then $\Phi[X] = x_0 I + \sum_{i=1}^3 \lambda_i x_i J_i$ and
$\Phi[X] \geqslant 0$ if and only if $x_0 \geqslant j
\sqrt{\sum_{i=1}^3 \lambda_i^2 x_i^2}$. Suppose $x_0 = j x_1$ and
$x_2 = x_3 = 0$, then $\Phi[X] \geqslant 0$ if $|\lambda_1|
\leqslant 1$. Similarly, necessary conditions
$|\lambda_2|\leqslant 1$ and $|\lambda_3|\leqslant 1$ appear for
choices $x_0 = j x_2$, $x_1 = x_3 = 0$ and $x_0 = j x_3$, $x_1 =
x_2 = 0$, respectively.

\begin{proposition}
Suppose the spin polarization-scaling map $\Phi$ is positive, then
$|\lambda_i| \leqslant 1$, $i=1,2,3$.
\end{proposition}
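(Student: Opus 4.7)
The plan is to exhibit an explicit family of positive operators on which the action of $\Phi$ can be computed directly, and then to specialize the parameters so that the image being positive-semidefinite forces each $|\lambda_i|$ to be bounded by $1$. The family in question is already written down in \eqref{X-positive}, and in fact the author's preceding paragraph sketches the argument; what remains is to present it in a structured way.

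First, I would verify that every operator of the form $X = x_0 I + \sum_{i=1}^{3} x_i J_i$ with $x_0\geqslant j\sqrt{\sum_i x_i^2}$ is positive-semidefinite. Reusing the eigenvalue observation from the preceding positivity analysis, the operator $\sum_i x_i J_i$ has eigenvalues $|{\bf x}|\{j,j-1,\ldots,-j\}$, so $X$ has eigenvalues $x_0 + k|{\bf x}|$ with $k\in\{-j,\ldots,j\}$; the smallest one, $x_0 - j|{\bf x}|$, is nonnegative by assumption.

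Next, I would compute $\Phi[X]$. Using ${\rm tr}[J_i]=0$, ${\rm tr}[I]=2j+1$, and the orthogonality relation ${\rm tr}[J_k J_l] = \frac{1}{3}j(j+1)(2j+1)\delta_{kl}$ already cited after \eqref{polarization-scaling-map}, the definition of $\Phi$ gives $\Phi[I]=I$ and $\Phi[J_i]=\lambda_i J_i$; hence $\Phi[X] = x_0 I + \sum_i \lambda_i x_i J_i$. Applying the same eigenvalue argument to the image shows that $\Phi[X]\geqslant 0$ if and only if $x_0 \geqslant j\sqrt{\sum_i \lambda_i^2 x_i^2}$.

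Finally, I would extract the three inequalities by choosing specific coefficients. Taking $x_0 = j x_1$ with $x_1>0$ and $x_2 = x_3 = 0$ yields a positive $X$ whose image is positive-semidefinite only when $j x_1 \geqslant j|\lambda_1| x_1$, that is, $|\lambda_1|\leqslant 1$; the remaining two axes are handled by the obvious permutations. There is essentially no obstacle once the family \eqref{X-positive} is in hand: the one step requiring any care is the diagonalization of $\sum_i x_i J_i$, which is carried out by choosing the quantization axis along ${\bf x}/|{\bf x}|$ and invoking rotational covariance of the angular-momentum operators.
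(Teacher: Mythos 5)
Your proposal is correct and follows the paper's own argument essentially verbatim: you use the family \eqref{X-positive}, compute $\Phi[X]=x_0 I+\sum_i\lambda_i x_i J_i$, and specialize to $x_0=jx_1$, $x_2=x_3=0$ (and permutations) to extract $|\lambda_i|\leqslant 1$. The only addition is the explicit verification that $X\geqslant 0$ and that $\Phi[X]\geqslant 0$ iff $x_0\geqslant j\sqrt{\sum_i\lambda_i^2x_i^2}$, which the paper states without elaboration; both are handled correctly.
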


\section{Complete positivity}
\label{section-cp}

A linear map $\Phi$ is called completely positive if $\Phi \otimes
{\rm Id}_k$ is positive for all $k=1,2,\ldots$. Here ${\rm Id}_k$
is the identity transformation of $k$-dimensional operators
$\mathcal{B}(\mathcal{H}_k)$.

\begin{proposition}
The spin polarization-scaling map $\Phi$ is completely positive if
and only if
\begin{equation}
\label{choi-positive} I \otimes I + \frac{3}{j(j+1)} \left(
\lambda_1 J_1 \otimes J_1 - \lambda_2 J_2 \otimes J_2 + \lambda_3
J_3 \otimes J_3 \right) \geqslant 0.
\end{equation}
\end{proposition}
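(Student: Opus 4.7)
The plan is to apply the Choi--Jamio\l kowski isomorphism: $\Phi$ is completely positive if and only if its Choi operator
\[
C_\Phi = (\Phi \otimes \mathrm{Id})[\ket{\Omega}\bra{\Omega}], \qquad \ket{\Omega} = \sum_{m=-j}^{j} \ket{m}\otimes\ket{m},
\]
is positive semidefinite on $\mathcal{H}\otimes\mathcal{H}$. So the task reduces to computing $C_\Phi$ explicitly from the definition~\eqref{polarization-scaling-map} and recognizing the resulting operator as (a positive multiple of) the one in~\eqref{choi-positive}.

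For the computation I would use the elementary observation that if a linear map has the form $\Phi[X] = \sum_k \alpha_k \,\mathrm{tr}[B_k X]\, A_k$ with Hermitian $A_k, B_k$, then its Choi operator equals $\sum_k \alpha_k\, A_k \otimes B_k^T$, where the transpose is taken in the basis used to define $\ket{\Omega}$. This follows from $\sum_{mn} \mathrm{tr}[B_k \ket{m}\bra{n}]\ket{m}\bra{n} = \sum_{mn}\bra{n}B_k\ket{m}\ket{m}\bra{n} = B_k^T$. Applied term by term to~\eqref{polarization-scaling-map}, this yields
\[
C_\Phi = \frac{1}{2j+1}\left( I\otimes I + \frac{3}{j(j+1)} \sum_{i=1}^{3} \lambda_i\, J_i \otimes J_i^{T}\right).
\]

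The one point that requires care, and which I expect to be the main obstacle (or at least the only nontrivial arithmetical point), is evaluating the transposes $J_i^{T}$ in the standard eigenbasis $\{\ket{j,m}\}$ of $J_3$. Writing $J_1 = (J_+ + J_-)/2$ and $J_2 = -i(J_+ - J_-)/2$ and using that $J_\pm$ are real matrices with $J_+^{T} = J_-$, one obtains $J_1^{T} = J_1$, $J_3^{T} = J_3$, but $J_2^{T} = -J_2$. This is exactly what produces the minus sign in front of $\lambda_2$ in~\eqref{choi-positive}; without it the statement would be wrong, and the asymmetry between $\lambda_2$ and $\lambda_1, \lambda_3$ in the criterion would look mysterious.

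Substituting these transposes and pulling out the overall positive prefactor $1/(2j+1)$, one sees that $C_\Phi \geqslant 0$ is equivalent to~\eqref{choi-positive}. By Choi's theorem this is equivalent to complete positivity of $\Phi$, which proves both directions of the stated equivalence.
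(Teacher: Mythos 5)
Your proof is correct, and it rests on the same pillar as the paper's: Choi's theorem, reducing complete positivity to positive semidefiniteness of the Choi operator. Where you differ is in how the Choi operator is evaluated. The paper expands $\Phi[\ket{jm}\bra{jm'}]$ term by term and reassembles the result using the ladder operators $J_\pm$, which takes ``some algebra''; you instead invoke the general identity that a map of the form $X \mapsto \sum_k \alpha_k\,{\rm tr}[B_k X]\,A_k$ has Choi operator $\sum_k \alpha_k\, A_k \otimes B_k^{\top}$, which reduces the whole computation to the one-line observation that $J_1^{\top}=J_1$, $J_2^{\top}=-J_2$, $J_3^{\top}=J_3$ in the $J_3$ eigenbasis. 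Your route is shorter, correctly isolates the origin of the minus sign in front of $\lambda_2$ as a transposition artifact, and generalizes immediately to any map given in this ``trace form''; the paper's explicit computation has the minor side benefit of exhibiting the matrix elements of the Choi operator in the $\ket{jm}\otimes\ket{jm'}$ basis, which it reuses later. The only detail worth stating explicitly in your write-up is that your unnormalized $\ket{\Omega}$ differs from the paper's normalized $\ket{\psi_+}$ by the harmless positive factor $2j+1$, which you do account for when pulling out the prefactor.
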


\begin{proof}
A linear map $\Phi:\mathcal{B}(\mathcal{H}_d) \mapsto
\mathcal{B}(\mathcal{H}_d)$ is known to be completely positive if
and only if its Choi matrix $\Omega_{\Phi} = (\Phi\otimes{\rm
Id}_d)[\ket{\psi_+}\bra{\psi_+}]$ is
positive-semidefinite~\cite{choi-1975} (see
also~\cite{pillis-1967,jamiolkowski-1972,jiang-2013,majewski-2013}),
where $\ket{\psi_+} = \frac{1}{\sqrt{d}}\sum_{i=1}^d \ket{i}
\otimes \ket{i}$ is the maximally entangled state and
$\{\ket{i}\}_{i=1}^d$ is some orthonormal basis in
$\mathcal{H}_d$.

For our construction of the Choi operator let us choose
eigenvectors of the operator $J_3$ as the basis, namely, $J_3
\ket{jm} = m \ket{jm}$ and $\ip{jm'}{jm} = \delta_{m'm}$, $d= 2j
+1$. Introduce auxiliary operators $J_{\pm} = J_1 \pm i J_2$, then
$J_{\pm} \ket{jm} = \sqrt{(j \mp 1)(j \pm m + 1)} \ket{j m \pm
1}$. Some algebra yields
\begin{eqnarray}
\label{choi-matrix} && (2j+1)^2 (\Phi \otimes {\rm
Id})[\ket{\psi_+}\bra{\psi_+}] \nonumber\\
&& = \sum_{m,m'=-j}^j \Phi[\ket{jm}\bra{jm'}] \otimes
\ket{jm}\bra{jm'}
\nonumber\\
&& = \sum_{m=-j}^j \left( I + \frac{3m}{j(j+1)} \lambda_3 J_3
\right) \otimes \ket{jm}\bra{jm} \nonumber\\
&& \quad + \sum_{m=-j}^j \frac{3}{2j(j+1)} \sqrt{(j-m)(j+m+1)}
 \nonumber\\
 && \qquad\qquad \times (\lambda_1 J_1 - i \lambda_2 J_2) \otimes \ket{m} \bra{m+1}
\nonumber\\
&& \quad + \sum_{m=-j}^j \frac{3}{2j(j+1)} \sqrt{(j+m)(j-m+1)}
 \nonumber\\
 && \qquad\qquad \times (\lambda_1
J_1 + i \lambda_2 J_2) \otimes \ket{m} \bra{m-1} \nonumber\\
&& = I \otimes I + \frac{3}{j(j+1)} \left( \lambda_1 J_1 \otimes
J_1 - \lambda_2 J_2 \otimes J_2 + \lambda_3 J_3 \otimes J_3
\right). \nonumber\\
\end{eqnarray}

\noindent Thus, $\Phi$ is completely positive if and only if the
operator \eqref{choi-matrix} is positive-semidefinite.
\end{proof}

\begin{example}
If $j=\frac{1}{2}$, then angular momentum operators are given by
matrices
\begin{equation}
J_1 = \frac{1}{2} \left(%
\begin{array}{cc}
  0 & 1 \\
  1 & 0 \\
\end{array}%
\right), \quad J_2 = \frac{1}{2} \left(%
\begin{array}{cc}
  0 & -i \\
  i & 0 \\
\end{array}%
\right), \quad J_3 = \frac{1}{2} \left(%
\begin{array}{cc}
  1 & 0 \\
  0 & -1 \\
\end{array}%
\right)
\end{equation}

\noindent in the basis $\{\ket{\frac{1}{2} \frac{1}{2}},
\ket{\frac{1}{2} -\frac{1}{2}}\}$. The
condition~\eqref{choi-positive} reduces to $1 \pm \lambda_3
\geqslant | \lambda_1 \pm \lambda_2 |$. Geometrically, these
inequalities correspond a tetrahedron with vertices $(1,1,1)$,
$(1,-1,-1)$, $(-1,1,-1)$, and $(-1,-1,1)$ in the parameter space
$(\lambda_1,\lambda_2,\lambda_3)$~\cite{ruskai-2002}.
\end{example}

\begin{figure*}
\includegraphics[width=12cm]{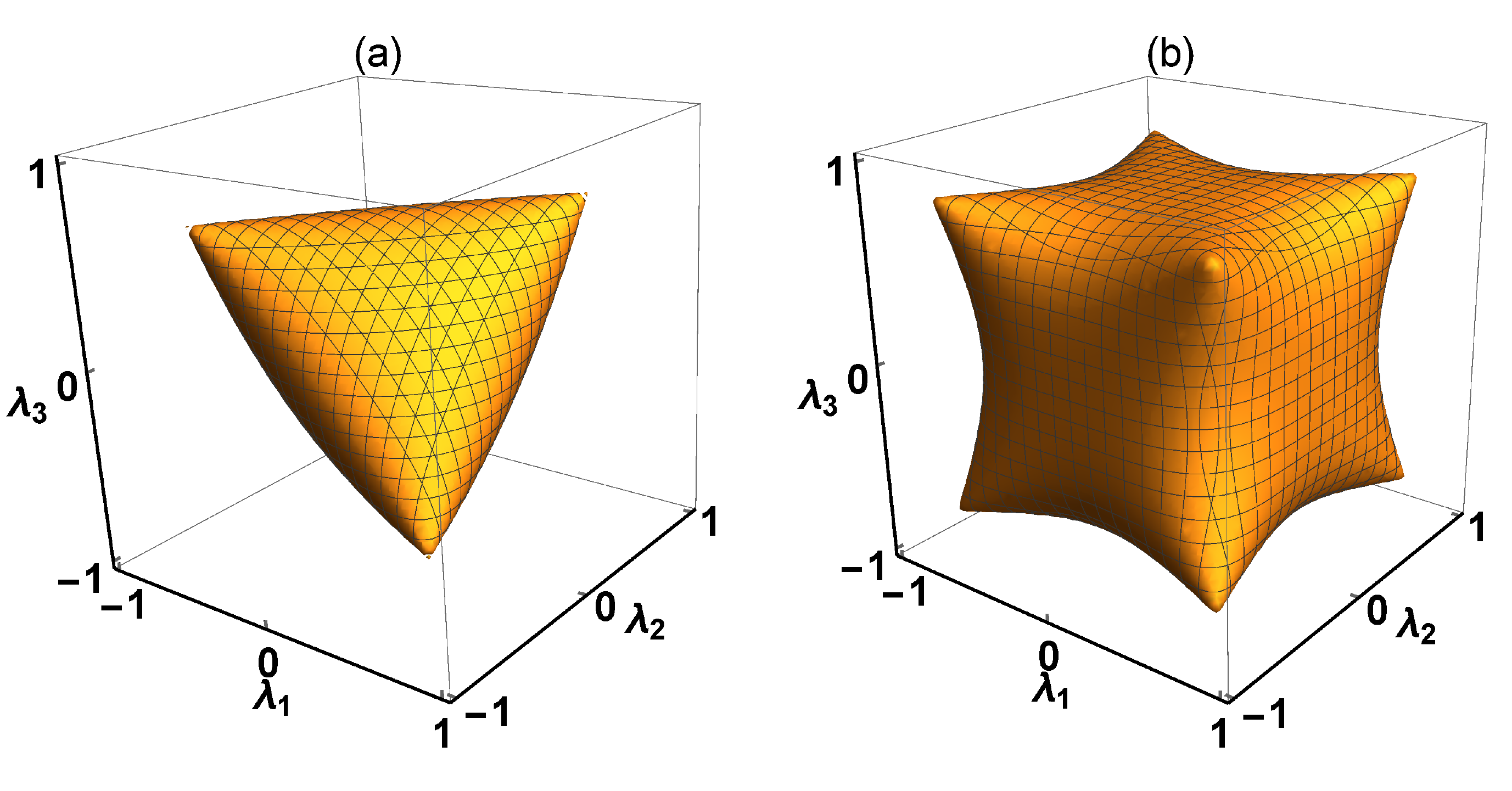}
\caption{\label{figure} Region of parameters
$(\lambda_1,\lambda_2,\lambda_3)$ for which the spin-$1$
polarization map $\Phi$ is completely positive (a), the map
$\Phi^2$ is completely positive (b).}
\end{figure*}

\begin{example}
If $j = 1$, then angular momentum operators are given by matrices
\begin{eqnarray}
&& J_1 = \frac{1}{\sqrt{2}} \left(%
\begin{array}{ccc}
  0 & 1 & 0 \\
  1 & 0 & 1 \\
  0 & 1 & 0 \\
\end{array}%
\right), \quad \ J_2 = \frac{1}{\sqrt{2}} \left(%
\begin{array}{ccc}
  0 & -i & 0 \\
  i & 0 & -i \\
  0 & i & 0 \\
\end{array}%
\right), \nonumber\\
&& J_3 = \left(%
\begin{array}{ccc}
  1 & 0 & 0 \\
  0 & 0 & 0 \\
  0 & 0 & -1 \\
\end{array}%
\right)
\end{eqnarray}

\noindent in the basis $\{\ket{11}, \ket{10}, \ket{1-1}\}$. The
condition~\eqref{choi-positive} reduces to $4-9(\lambda_1^2 +
\lambda_2^2 + \lambda_3^2) + 27 \lambda_1 \lambda_2 \lambda_3
\geqslant 0$, $|\lambda_i| \leqslant \frac{2}{3}$, $i=1,2,3$.
Geometrical figure corresponding to such inequalities is depicted
in Fig.~\ref{figure}(a).
\end{example}

\section{Entanglement breaking}
\label{section-eb}

A positive-semidefinite operator $R \in \mathcal{B}(\mathcal{H}_1)
\otimes \mathcal{B}(\mathcal{H}_2)$ is called separable
(non-entangled) if there exist positive-semidefinite operators
$R_1^{(k)}$ and $R_2^{(k)}$ such that $R = \sum_k R_1^{(k)}
\otimes R_2^{(k)}$~\cite{werner-1989,horodecki-2009}. A linear map
$\Phi: \mathcal{H}_1 \mapsto \mathcal{H}_1$ is called entanglement
breaking if $(\Phi \otimes {\rm Id}) [\varrho]$ is separable for
all $\varrho \in \mathcal{S}(\mathcal{H}_1 \otimes \mathcal{H}_2)$
and identity transformation ${\rm Id}: \mathcal{H}_2 \mapsto
\mathcal{H}_2$~\cite{holevo-1998,king-2002,shor-2002,ruskai-2003,horodecki-2003}.
The well-known result is that $\Phi$ is entanglement breaking if
and only if the Choi matrix $\Omega_{\Phi}$ is separable.

The necessary condition for separability of $\Omega_{\Phi}$ is
that $\Omega_{\Phi}^{\Gamma} \geqslant 0$, where $X^{\Gamma} =
\sum_{j,j'} I \otimes \ket{j'}\bra{j} X I \otimes \ket{j'}\bra{j}$
is the partially transposed operator, $X\in
\mathcal{B}(\mathcal{H}_1) \otimes
\mathcal{B}(\mathcal{H}_2)$~\cite{peres-1996,horodecki-1996}.
Applying such a condition to the Choi matrix~\eqref{choi-matrix}
and taking into account that in conventional basis
$\{jm\}_{m=-j}^j$ the matrices $J_x^{\top} = J_x$, $J_y^{\top} =
-J_y$, $J_z^{\top} = J_z$, we obtain the following result.

\begin{proposition}
Suppose the spin polarization-scaling map $\Phi$ is completely
positive and entanglement breaking, then
\begin{equation} \label{choi-PPT}
I \otimes I + \frac{3}{j(j+1)} \left( \lambda_1 J_1 \otimes J_1
\pm \lambda_2 J_2 \otimes J_2 + \lambda_3 J_3 \otimes J_3 \right)
\geqslant 0.
\end{equation}
\end{proposition}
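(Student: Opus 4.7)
The plan is to apply the Peres--Horodecki PPT criterion to the Choi matrix $\Omega_\Phi$ already computed in~\eqref{choi-matrix}. Since $\Phi$ is entanglement breaking, $\Omega_\Phi$ is separable, and every separable operator is automatically PPT, so $\Omega_\Phi^\Gamma \geqslant 0$, where $\Gamma$ denotes partial transposition on the second tensor factor taken in the basis $\{\ket{jm}\}_{m=-j}^{j}$ used throughout the excerpt.

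Next I would record the elementary fact that in this very basis the angular-momentum operators satisfy $J_1^\top = J_1$, $J_2^\top = -J_2$, and $J_3^\top = J_3$; indeed $J_3$ is real diagonal, and in view of $J_\pm = J_1 \pm iJ_2$ with real matrix elements connecting $\ket{jm}$ and $\ket{j,m\pm 1}$, $J_1$ is real symmetric while $J_2$ is purely imaginary antisymmetric. Applying the partial transpose to
\[
(2j+1)^2 \Omega_\Phi = I \otimes I + \frac{3}{j(j+1)} \bigl( \lambda_1 J_1 \otimes J_1 - \lambda_2 J_2 \otimes J_2 + \lambda_3 J_3 \otimes J_3 \bigr)
\]
therefore only flips the sign of the middle term, yielding
\[
(2j+1)^2 \Omega_\Phi^\Gamma = I \otimes I + \frac{3}{j(j+1)} \bigl( \lambda_1 J_1 \otimes J_1 + \lambda_2 J_2 \otimes J_2 + \lambda_3 J_3 \otimes J_3 \bigr).
\]

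The PPT requirement $\Omega_\Phi^\Gamma \geqslant 0$ then delivers the $+$ choice of sign in~\eqref{choi-PPT}, while the assumed complete positivity supplies the $-$ sign directly through~\eqref{choi-positive}; together these are precisely the $\pm$ form asserted in the statement. I do not anticipate any genuine obstacle: the only point needing minor care is the sign bookkeeping under partial transposition, which is fully settled by the transposition identities for $J_1,J_2,J_3$ just recorded.
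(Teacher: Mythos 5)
Your proposal is correct and follows essentially the same route as the paper: invoke separability of the Choi matrix $\Omega_\Phi$, apply the Peres--Horodecki PPT criterion, and use the transposition identities $J_1^{\top}=J_1$, $J_2^{\top}=-J_2$, $J_3^{\top}=J_3$ in the $\{\ket{jm}\}$ basis to see that partial transposition flips only the sign of the $J_2\otimes J_2$ term, with the other sign supplied by complete positivity via~\eqref{choi-positive}. No gaps.
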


Note that the requirement~\eqref{choi-PPT} is sufficient for the
channel $\Phi$ to be entanglement binding~\cite{horodecki-2000}.

\begin{example}
If $j=\frac{1}{2}$, then Eq.~\eqref{choi-PPT} is equivalent to
$|\lambda_1| + |\lambda_2| + |\lambda_3| \leqslant 1$.
\end{example}

\begin{example}
If $j=1$, then Eq.~\eqref{choi-PPT} is equivalent to
$4-9(\lambda_1^2 + \lambda_2^2 + \lambda_3^2) \pm 27 \lambda_1
\lambda_2 \lambda_3 \geqslant 0$.
\end{example}

\section{2-tensor-stable properties}
\label{section-2-tsp}

Some properties of linear maps do not change under tensoring the
map with itself, for instance, $\Phi \otimes \Phi$ is completely
positive if and only if $\Phi$ is completely positive. Similarly,
$\Phi \otimes \Phi$ is entanglement breaking if and only if $\Phi$
is entanglement breaking (see,
e.g.,~\cite{filippov-melnikov-ziman-2013}). However, other
properties of a map can change drastically under tensor power. For
example, the map $\Phi \otimes \Phi$ can be non-positive even if
$\Phi$ is positive~\cite{filippov-magadov}.

A linear map $\Phi$ is called 2-tensor-stable positive if $\Phi
\otimes \Phi$ is positive~\cite{muller-hermes-2016}.

\begin{example}
\label{example-qubit} It is shown in Ref.~\cite{filippov-magadov}
that the spin polarization-scaling map $\Phi$ given by
Eq.~\eqref{polarization-scaling-map} for qubits ($j=\frac{1}{2}$)
is 2-tensor-stable positive if and only if $\Phi^2$ is completely
positive, i.e. $1 \pm \lambda_3^2 \geqslant |\lambda_1^2 \pm
\lambda_2^2|$.
\end{example}

For higher spins ($j \geqslant 1$) the result of
Example~\ref{example-qubit} can be extended as follows.

\begin{proposition}
\label{proposition-2-tsp} If the spin polarization-scaling map
$\Phi$ is 2-tensor-stable positive, then $\Phi^2$ is completely
positive.
\end{proposition}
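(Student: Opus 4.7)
The plan is to prove the operator identity
$$(\Phi \otimes \Phi)[\ket{\psi_+}\bra{\psi_+}] = (\Phi^2 \otimes {\rm Id})[\ket{\psi_+}\bra{\psi_+}],$$
where $\ket{\psi_+}$ is the maximally entangled vector used in Section~\ref{section-cp}. Once this identity is in hand, the proposition follows in one line: since $\ket{\psi_+}\bra{\psi_+} \geqslant 0$ and $\Phi \otimes \Phi$ is positive by hypothesis, the left-hand side is positive-semidefinite; hence so is the right-hand side, which is (up to a positive factor) the Choi matrix of $\Phi^2$. Choi's theorem, already invoked in Section~\ref{section-cp}, then yields complete positivity of $\Phi^2$.

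To establish the identity I would reuse the calculation~\eqref{choi-matrix} performed for Proposition~2. That calculation gives
$$(\Phi \otimes {\rm Id})[\ket{\psi_+}\bra{\psi_+}] = \tfrac{1}{(2j+1)^2}\bigl[ I \otimes I + \tfrac{3}{j(j+1)}(\lambda_1 J_1 \otimes J_1 - \lambda_2 J_2 \otimes J_2 + \lambda_3 J_3 \otimes J_3) \bigr],$$
which is a linear combination of $I \otimes I$ and $J_k \otimes J_k$ only, with no cross terms $J_k \otimes J_l$ for $k \neq l$. Because $\Phi[I] = I$ and $\Phi[J_k] = \lambda_k J_k$, the two maps $\Phi \otimes {\rm Id}$ and ${\rm Id} \otimes \Phi$ act identically on each of these generators: both scale $J_k \otimes J_k$ by $\lambda_k$ and both leave $I \otimes I$ invariant. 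Applying $\Phi \otimes {\rm Id}$ a second time to the displayed operator produces $(\Phi^2 \otimes {\rm Id})[\ket{\psi_+}\bra{\psi_+}]$ (formula~\eqref{choi-matrix} with $\lambda_k \mapsto \lambda_k^2$), whereas applying ${\rm Id} \otimes \Phi$ instead produces $(\Phi \otimes \Phi)[\ket{\psi_+}\bra{\psi_+}]$; by the previous observation these two operators coincide.

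The structural point that makes the argument work is the \emph{diagonal} form of the Choi matrix of $\Phi$ in the $J_k \otimes J_k$ sector, which allows $\Phi$ to be freely transferred between tensor factors; the transposition sign $J_2^{\top} = -J_2$ is already absorbed into the coefficient $-\lambda_2$ and is carried over intact to the $-\lambda_2^2$ coefficient in the Choi matrix of $\Phi^2$. I foresee no real obstacle, as the statement reduces to the algebraic substitution $\lambda_k \mapsto \lambda_k^2$ in a formula that has already been derived. The converse implication would require much more work and indeed fails in general for $j \geqslant 1$, so the argument is intrinsically one-directional.
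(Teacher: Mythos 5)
Your proposal is correct and follows essentially the same route as the paper: both factor $\Phi\otimes\Phi = ({\rm Id}\otimes\Phi)\circ(\Phi\otimes{\rm Id})$, apply it to $\ket{\psi_+}\bra{\psi_+}$, and use the diagonal form of the Choi matrix~\eqref{choi-matrix} (together with $\Phi[I]=I$, $\Phi[J_k]=\lambda_k J_k$) to identify the result with the Choi matrix of $\Phi^2$. No gaps.
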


\begin{proof}
Consider a positive-semidefinite operator
$\ket{\psi_+}\bra{\psi_+}$, where $\ket{\psi_+} = (2j+1)^{-1/2}
\sum_{m=-j}^j \ket{jm} \otimes \ket{jm}$. The action of the
positive map $\Phi \otimes \Phi$ on such an operator reads
\begin{eqnarray}
&& 0 \leqslant (2j+1)^2 (\Phi \otimes
\Phi)[\ket{\psi_+}\bra{\psi_+}] \nonumber\\
&& = (2j+1)^2 ({\rm Id} \otimes \Phi) \circ (\Phi \otimes {\rm
Id})
[\ket{\psi_+}\bra{\psi_+}] \nonumber\\
&& = ({\rm Id} \otimes \Phi) \bigg[ I \otimes I + \frac{3}{j(j+1)}
\nonumber\\
&& \qquad\qquad\quad \times \left( \lambda_1 J_1 \otimes J_1 -
\lambda_2 J_2 \otimes J_2 + \lambda_3 J_3 \otimes J_3 \right)
\bigg]
\nonumber\\
&& = I \otimes I + \frac{3}{j(j+1)} \left( \lambda_1^2 J_1 \otimes
J_1 - \lambda_2^2 J_2 \otimes J_2 +
\lambda_3^2 J_3 \otimes J_3 \right) \nonumber\\
&& = (2j+1)^2 (\Phi^2 \otimes {\rm Id})[\ket{\psi_+}\bra{\psi_+}],
\end{eqnarray}

\noindent i.e. the Choi matrix $\Omega_{\Phi^2}$ is
positive-semidefinite and $\Phi^2$ is completely positive.
\end{proof}

In contrast to the case $j=\frac{1}{2}$, for higher spins ($j
\geqslant 1$) Proposition~\ref{proposition-2-tsp} provides the
necessary condition only. For instance, it is not hard to see that
for $j=1$ there exists a spin polarization-scaling map $\Phi$ such
that $\Phi^2$ is completely positive but
$(\Phi\otimes\Phi)[\ket{\varphi}\bra{\varphi}] \ngeqslant 0$ for
the Schmidt-rank-2 state $\ket{\varphi} = \frac{1}{\sqrt{2}}\left(
\ket{11} + \ket{1-1} \right)$.

In the case $j=1$, the map $\Phi^2$ is completely positive if and
only if $4-9(\lambda_1^4 + \lambda_2^4 + \lambda_3^4) + 27
\lambda_1^2 \lambda_2^2 \lambda_3^2 \geqslant 0$, which is
depicted in Fig.~\ref{figure}(b).

A linear map $\Phi: \mathcal{B}(\mathcal{H}) \mapsto
\mathcal{B}(\mathcal{H})$ is called 2-locally entanglement
annihilating~\cite{moravcikova-ziman-2010,filippov-rybar-ziman-2012,filippov-ziman-2013,filippov-ziman-2014}
if $(\Phi\otimes\Phi)[\ket{\psi}\bra{\psi}]$ is separable for all
$\ket{\psi} \in \mathcal{H} \otimes \mathcal{H}$. The same line of
reasoning as for 2-tensor-stable positive maps leads to the
following result.

\begin{proposition}
\label{proposition-2-lea} If the spin polarization-scaling map
$\Phi$ is 2-locally entanglement annihilating, then $\Phi^2$ is
entanglement breaking.
\end{proposition}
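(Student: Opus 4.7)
The approach is a direct analogue of the proof of Proposition~\ref{proposition-2-tsp}, with ``separable'' in place of ``positive-semidefinite'' at the decisive step. The input is again the maximally entangled pure state, and the argument hinges on the same algebraic identity relating $(\Phi\otimes\Phi)$ and $(\Phi^2\otimes{\rm Id})$ on this state.

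First, I would apply the hypothesis to the pure vector $\ket{\psi_+} = (2j+1)^{-1/2}\sum_{m=-j}^{j}\ket{jm}\otimes\ket{jm}$: by the definition of 2-local entanglement annihilation, the output $(\Phi\otimes\Phi)[\ket{\psi_+}\bra{\psi_+}]$ is separable. Next, I would invoke the identity
\[
(\Phi\otimes\Phi)[\ket{\psi_+}\bra{\psi_+}] \;=\; (\Phi^2\otimes{\rm Id})[\ket{\psi_+}\bra{\psi_+}],
\]
which is already established in the proof of Proposition~\ref{proposition-2-tsp} through the factorization $\Phi\otimes\Phi=({\rm Id}\otimes\Phi)\circ(\Phi\otimes{\rm Id})$ combined with the explicit Choi calculation~\eqref{choi-matrix}. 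That derivation uses only the algebraic action $I\mapsto I$, $J_i\mapsto\lambda_i J_i$ of $\Phi$, and never positivity, so it transfers unchanged to the present setting.

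Finally, the right-hand side, rescaled by $(2j+1)^2$, is precisely the Choi matrix $\Omega_{\Phi^2}$ of $\Phi^2$. Since separability is preserved by the equality displayed above, $\Omega_{\Phi^2}$ is separable, and the Horodecki--Shor--Ruskai characterization of entanglement-breaking channels recalled at the start of Sec.~\ref{section-eb} yields that $\Phi^2$ is entanglement breaking.

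There is essentially no technical obstacle: the proof is a one-line upgrade of the preceding one, strengthening the conclusion drawn about $(\Phi\otimes\Phi)[\ket{\psi_+}\bra{\psi_+}]$ from ``positive-semidefinite'' to ``separable''. What is worth noting, however, is that for $j\geqslant 1$ this gives only a necessary condition; in analogy with the remark following Proposition~\ref{proposition-2-tsp}, one would expect a Schmidt-rank-$2$ vector such as $\frac{1}{\sqrt{2}}(\ket{11}+\ket{1-1})$ to witness the failure of the converse for suitably chosen $(\lambda_1,\lambda_2,\lambda_3)$, so that 2-local entanglement annihilation is strictly stronger than entanglement breakingness of $\Phi^2$.
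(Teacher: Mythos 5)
Your proof is correct and is exactly the argument the paper intends: the paper itself gives no separate proof of Proposition~\ref{proposition-2-lea}, stating only that ``the same line of reasoning as for 2-tensor-stable positive maps'' applies, which is precisely your upgrade of the conclusion on $(\Phi\otimes\Phi)[\ket{\psi_+}\bra{\psi_+}]$ from positive-semidefinite to separable, combined with the identity $(\Phi\otimes\Phi)[\ket{\psi_+}\bra{\psi_+}]=(\Phi^2\otimes{\rm Id})[\ket{\psi_+}\bra{\psi_+}]$ and the separability criterion for the Choi matrix. No gaps.
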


\section{Conclusions}
\label{section-conclusions}

We have considered the physically motivated sets of operator maps
for spin systems. The physical meaning of such maps is the
degradation of spin polarization with scaling parameters
$\lambda_1,\lambda_2,\lambda_3$ along the axes $x,y,z$,
respectively. We have found conditions (necessary, or sufficient,
or both) under which the spin polarization-scaling map is
positive, completely positive, entanglement breaking,
2-tensor-stable positive, 2-locally entanglement annihilating.
These results can be of use in the analysis of data, where only
spin polarization degrees of freedom are available. The crucial
difference between the cases of spin-$\frac{1}{2}$ and spin-$1$
particles is illustrated in a series of examples.

\begin{acknowledgements}
The study is supported by Russian Science Foundation under project
No. 16-11-00084.
\end{acknowledgements}

\end{document}